\title{A Constant-Factor Approximation Algorithm for Point Guarding an Art Gallery} 
\titlerunning{Constant-approximation algorithm for the point guarding problem} 
\author{Arash Vaezi}{Sharif University of Technology, Tehran, Iran}{avaezi@ce.sharif.edu}{}{}
\authorrunning{A.\,Vaezi} 
\keywords{Point Guarding, Art Gallery, Approximation Factor} 
\newtheorem{pro}{Problem}
\def\for{\mbox{\it for}}
\def\rf{\mbox{\it rf}}
\def\VP{\mbox{\it VP}}
\def\Vp{\mbox{\it Vp}}
\def\VL{\mbox{\it VL}}
\def\NP{\mbox{\it NP}}
\def\WVP{\mbox{\it WVP}}
\def\CVP{\mbox{\it CVP}}
\def\OPT{\mbox{\it OPT}}
\def\SCR{\mbox{\it SCR}}
\def\Findtsr{\mbox{\it Findtsr}}
\def\inPL{\mbox{\it inPL}}
\def\seg#1{\overline{#1}}
\def\P{\cal P}
\def\S{\cal S}
\def\F{\cal F}
\def\M{\cal M}
\def\G{\cal G}
\def\D{\cal D}
\begin{document}

\maketitle

\begin{abstract}
Given a simple polygon $\P$, in the Art Gallery problem the goal is to find the minimum number of guards needed to cover the entire $\P$, where a guard is a point and can see another point $q$ when $\overline{pq}$ does not cross the edges of $\P$.
 This paper studies a variant of the Art Gallery problem in which guards are restricted to lie on a dense grid inside $\P$.
In the general problem, guards can be anywhere inside or on the boundary of $\P$. The general problem is called the \emph{point} guarding problem. It was proved that the point guarding problem is APX-complete, meaning that we cannot do better than a constant-factor approximation algorithm unless $P = \NP$. 
 A huge amount of research is committed to the studies of combinatorial and algorithmic aspects of this problem, and as of this time, we could not find a constant factor approximation for simple polygons. The last best-known approximation factor for the point guarding a simple polygon was $\mathcal{O}(\log (\OPT))$ introduced by É. Bonnet and T. Miltzow in 2020.
Here, we propose an algorithm with a constant approximation factor for the point guarding problem where the location of guards is restricted to a grid. The running time of the proposed algorithm depends on the number of cells of the grid. The approximation factor is constant regardless of the grid we use, the running time could be super-polynomial if the grid size becomes exponential. 
 
\end{abstract}
\section{Introduction and Related Works}
Consider a simple polygon $\P$ with $n$ vertices.   
The maximal sub-polygon of $\P$ visible to a point $q$ in $\P$ is called the \emph{visibility polygon} of $q$, which is denoted by $\VP(q)$. 
There are linear-time algorithms to compute $\VP(q)$ when the viewer is a point \cite{1}.
 For the segment $\seg{pq}$ inside $\P$, the \emph{weak visibility polygon} of $\seg{pq}$, denoted as $\WVP(\seg{pq})$, is the maximal sub-polygon of $\P$ visible to at least one point (not the endpoints) of $\seg{pq}$.
A polygon $\cal Q$ inside $\P$ is said to be \emph{completely visible} from $\seg{pq}$ if for every point $z \in \cal Q$ and for any point $w \in \seg{pq}$, $w$ and $z$ are visible (denoted as $\mathit{CVP}$ short from completely visible polygon). Also, $\cal Q$ is said to be \emph{strongly visible} from $\seg{pq}$ if there exists a point $w \in \seg{pq}$ such that for every point $z \in \cal Q$, $w$ and $z$ are visible ($\mathit{SVP}$). These different visibility types can be computed in linear time (\cite{2,avis}).

 In computational geometry, Art Gallery problems are motivated by the question, “How many security cameras are required to guard an art gallery?” The art gallery is modeled as a connected polygon $\P$. A camera, which we will henceforth call a guard, is modeled as a point in the polygon.
\emph{The Art Gallery problem} is to determine the minimum number of guards that are sufficient to see every point in the interior of an Art Gallery room.
The Art Gallery can be viewed as a polygon $\P$ of $n$ vertices, and the guards are stationary points in $\P$. Steve Fisk \cite{cite:fisk} proved that $\lfloor \frac{n}{3} \rfloor$ guards are always sufficient and sometimes necessary. Five years earlier, Victor Klee \cite{cite:victorklee} has posed this question to Václav Chvátal, who soon gave a more complicated solution. 

If guards are placed at vertices of $\P$, they are called \emph{vertex guards}~\cite{ComGeoBook}.
If guards are placed at any point of $\P$, they are called \emph{point guards}. If guards are allowed to be placed along the boundary of $\P$, they are called \emph{boundary-guards} (on the perimeter).
Since guards placed at points or vertices are stationary, they are referred to as \emph{stationary guards}. If guards are mobile along a segment inside $\P$, they are referred to as \emph{mobile guards}. If mobile guards move along edges of $\P$, they are referred to as \emph{edge guards}. 

The Art Gallery problem was proved to be NP-hard first for polygons with holes by~\cite{cite:rou&sup}. For guarding simple polygons, it was proved to be NP-complete for vertex guards by~\cite{lee}. This proof was generalized to work for point guards by~\cite{cite:aggarval}. For guarding simple polygons, the article~\cite{cite:Eidenbenz} proved that the problem is APX-complete, meaning that we cannot do better than a constant-factor approximation algorithm unless $P = \NP$. For guarding polygons with holes,~\cite{cite:Eidenbenz2} proved that if there is no restriction on the number of holes, the minimization problem
is as hard to approximate as Set Cover in general. It, therefore, follows from results about the inapproximability of Set Cover by~\cite{cite:Feige,cite:Raz&Safra} that, for polygons with holes, it is NP-hard to find a $o(\log n)$-approximation. These hardness results hold whether we are dealing with vertex guards, perimeter guards, or point guards.

Denote an optimal set of guards by $\OPT$. In 2006 Efrat and Har-Peled proposed an exact algorithm for the point guarding problem with running time at most $\mathcal{O}(n{\small|\OPT|})^{3(2{\small|\OPT|}+1)}$, where $|\OPT|$ is the size of the optimal solution~\cite{16}.
Ghosh~\cite{cite:Ghosh} provided an $\mathcal{O}(\log n)$-approximation algorithm for guarding polygons with or without holes with \emph{vertex} guards. His algorithm decomposes the input polygon into a polynomial number of cells such that each point in a given cell is seen by the same set of vertices. So, each vertex corresponds to a subset of the cells. The minimum number of vertices (subsets) that cover all of the cells cover $\P$. King and Krikpatrick obtained an approximation factor of $\mathcal{O}(\log|\OPT|)$ for
vertex guarding simple polygons and $\mathcal{O}((\log h)(\log |\OPT|))$ for vertex guarding a polygon with $h$ holes. 
They presented an $\mathcal{O}(\log \log |\OPT|)$-approximation algorithm for guarding simple polygons, using either vertex guards or perimeter guards, and the running time is polynomial in $n$ and the number of potential guard locations~\cite{cite:k}. 
In 2007 article~\cite{cite:pslogn} proposed a pseudopolynomial time $\mathcal{O} (\log n )$-approximation algorithm. Later, in 2011 the article~\cite{cite:roy} settles the conjecture for the polygons that are \emph{weakly} visible from an edge and contain no holes by presenting a $6$-approximation algorithm for finding the minimum number of vertex guards. However, a counterexample for this result was presented later in 2019 by~\cite{cite:afk}. Ashur, Filtzer, and Katz~\cite{cite:afk} confirmed the conjecture
for the important case of weakly visible polygons, by presenting a $(2 + \epsilon)$-approximation algorithm for guarding such a polygon using vertex guards.

For the vertex guard and the point guard variants of the Art Gallery problem,~\cite{cite:2020} ruled out any $f(k)n^{o(k/\log k)}$ algorithm, where $k := |S|$ is the number of guards, for any computable function $f$, unless the exponential time hypothesis fails. These lower bounds almost match the $n^{O(k)}$ algorithms that exist for both versions of the problem.

An interesting result about the Art Gallery problem~\cite{cite:stoc}: The complexity status of this problem had not been resolved for many years. It has long been known that the problem is NP-hard, but no one has been able to show that it lies in $\NP$.
Recently, the computational geometry community became more aware of the complexity class $\exists \mathbb{R}$, which has been studied earlier by other communities. The class $\exists \mathbb{R}$ consists of problems that can be reduced in polynomial time to the problem of deciding whether a system of polynomial equations with integer coefficients and any number of real variables has a solution. It can be easily seen that $\NP \subseteq \exists \mathbb{R}$. The article~\cite{cite:stoc} proved that the Art Gallery problem is $\exists \mathbb{R}$-complete, implying that
(1) any system of polynomial equations over the real numbers can be encoded as an instance of the Art Gallery problem, and (2) the Art Gallery problem is not in the complexity class $\NP$ unless
$\NP = \exists \mathbb{R}$.

\section{Problem Definition and Our Result}
The point guarding problem where guards are restrict defines as the following.

\begin{pro}[\bf The Restricted Point Guarding problem]
\label{prob}
Given a simple polygon $\P$, the goal is to find the smallest set $\G$ of points of $\P$ so that each point of $\P$ is seen by at least one point of $\G$, and the points of $\G$ are constrained to be belong to the set of vertices of an arbitrarily dense grid. 
\end{pro}
We are going to convert the point guarding problem into a discretized version.
There is a constant factor lost from the discretization to the actual point guarding problem. The discretized version itself is an instance of a particular geometric set cover problem, where we will solve it with a constant-approximation factor. Denote a decomposition of $\P$ into a set of convex components by $\D$, and denote the numbers of cells of $\D$ by $|\D|$ which indicates the complexity of $\D$. Given a decomposition $\D$, the algorithm's running time proposed here is polynomial in $|\D|$. Considering a very dense grid (a given decomposition) with superpolynomial complexity, we can solve the general point guarding problem with a constant approximation factor. Otherwise, the proposed algorithm works for the restricted point guarding problem where the location of the guards is restricted to lie on a pre-determined grid.

Efrat and Har-Peled \cite{16} had also dealt with restricted point guarding problem. Bonnet and Miltzow improved their result. 
Assuming integer coordinates and a specific general position on the vertices of $\P$, Bonnet and Miltzow~\cite{cite:logopt} presented an $\mathcal{O}(\log|\OPT|)$-approximation algorithm for the point guarding problem.
Their result is one of the most recent works on the point guarding problem in 2020 \cite{cite:2020}. They use two assumptions:  

Assumption 1 (Integer Vertex Representation); vertices are given by integers, represented
in binary.

Assumption 2 (General Position Assumption); no three extensions meet in a point of $\P$
which is not a vertex and no three vertices are collinear. 

Bonnet and Miltzow combined a randomized pseudo-polynomial time algorithm proposed by Deshpande et al. in 2007 \cite{13} with another approach presented by Efrat and Har-Peled in 2006 \cite{16}.
The algorithm presented by Deshpande et al. is flawed and retracted by Bonnet and Miltzow. Efrat and Har-Peledattain presented a randomized polynomial-time $\mathit{O}(\log |\OPT_{grid}|)$ algorithms restricting guards to a very fine grid~\cite{16}, where $\OPT_{grid}$ denotes an optimal set of guards that is restricted to some grid. Efrat and Har-Peled could not prove that their grid solution approximates an optimal guard placement. By developing the ideas of Deshpande et al. in combination with the algorithm of Efrat and Har-Peled, Edouard Bonnet and Tillmann Miltzow proposed the first randomized polynomial-time approximation algorithm for simple polygons \cite{cite:logopt}. The running time of the algorithm used by Efrat and Har-Peled is polynomial in the complexity of $\P$, and the spread $\Delta$ of the vertices of the gallery, which $\Delta$ can be at most exponential in the input size. The spread of the vertices of the gallery defines like this: $\Delta = \frac{L}{\epsilon}$ where $L$ is the longest and $\epsilon$ is the shortest pairwise distances among the vertices of the polygon. Edouard Bonnet and Tillmann Miltzow used and modified the same grid; we know that the size of that grid could be superpolynomial. The grid width is polynomial in the inverse of the diameter of the polygon, which can be exponential in the description of the polygon. Considering the same assumptions, one can use the same grid in the first decomposition step of the algorithm proposed in this article and yield a constant factor for the point guarding problem. However, the running time of this approach would be polynomial in the size of the grid. 

As mentioned above after the discretization process, we have an instance of the geometric set cover problem. 

The set cover problem has a privileged place in computer science. In this problem, we are given a ground set $\cal X$ together with a family $\F$ = $\{Y_{1},Y_{2},...,Y_{n}\}$ where each $Y_{i}$ is a subset of $\cal X$. The family $\F$ covers $\cal X$; that is, $\cal X$ is contained in the union of the elements of $\F$. The goal is to find a subfamily $\cal G\subseteq \F$ with the minimum number of elements that also covers $\cal X$. The geometric set cover problem is the special case of the set cover problem in geometric settings, where $\cal X$ is a (usually finite) subset of some fixed dimensional Euclidean space such as the two-dimensional plane; in this paper, we will assume this to be the given simple polygon $\P$. The family $\F$ of subsets of $\cal X$ is induced by some family of objects, for example, disks, triangles, or visibility polygons. For instance, in the set cover problem with disks, we are given a set of disks, and we wish to find the minimum subset of disks that cover (whose union contains) $\cal X$.

This NP-hard problem admits several different polynomial-time algorithms that guarantee an approximation factor of $\mathcal{O}(\log |\cal X|)$. 
Given the constraints imposed by geometry, it is reasonable to expect that we can obtain approximation factors better than $\mathcal{O}(\log |\cal X|)$. 
We intend to apply an efficient approximation algorithm proposed by Kasturi Varadarajan in 2010 \cite{kasturi}. Kasturi used a randomized technique called Quasi-Uniform sampling. He managed to propose a sub-logarithmic approximation factor for fairly geometric objects. The approximation factor obtained by Kasturi in \cite{kasturi} is related to the well-known combinatorial problem of bounding the size of an $\epsilon$-net. Suppose that we have a collection of $n$ fairly general geometric objects (those of constant description complexity) such as disks and an integer $1 \leq L \leq n$. A $\frac{L}{n}$-net is a subset of the objects that covers all points in the plane that are $L$-deep, that is contained within at least $L$ of the objects. Nets of size better than $\mathcal{O}(\frac{L}{n}\log\frac{L}{n})$ are known to exist when the combinatorial complexity of the boundary of the union of the objects is near-linear~\cite{cite:epsilon13,cite:epsilon4,kasturi-2}.  
Kasturi~\cite{kasturi} proposed a randomized algorithm and used samples of nets for a strategy called Quasi-Uniform sampling. He used the Quasi-Uniform sampling approach to obtain an approximation factor of $2^{O(\log^{*}m)}\log h(m)$ for objects with union complexity $mh(m)$, where $m$ is the number of objects. The approach was modified and improved by Timothy M. Chan et al. in 2011 \cite{timothy-chan}. The improved result presented an $\mathcal{O}(1)$-approximation factor for the geometric set cover problem considering the assumption that the union complexity of the geometric objects should be near-linear in the number of the objects.

Here, we have a simple polygon $\P$ as the ground set, and we obtain a set of geometric objects whose union complexity is linear. The minimum number of these objects corresponds to a solution for the point guarding problem. Regardless of the structure and the complexity of initial decomposition ($\D$), we proposed an approach that provides a constant factor approximation for the point guarding problem.
To solve the instance of the geometric set cover problem, we use the Quasi-Uniform sampling approach presented by Kasturi Varadarajan \cite{kasturi}, and its modification that was improved by Timothy M. Chan et al.~\cite{timothy-chan}. This leads to a constant factor approximation for the point guarding problem. 

To discretize the point guarding problem into an instance of the geometric set cover problem, besides a given grid, we can use a decomposition approach proposed here. No matter how we decompose $\P$, each convex region obtained from the initial decomposition ($\D$) is called a spanning convex region, and the set of the spanning convex regions is denoted by $\SCR$. We subdivide every spanning convex region such that every region after this subdivision estimates a place for a guard. Every such region is called a \emph{guarding-region}. Every guarding-region carries a list of completely visible spanning convex regions. In fact, the characteristic of every guarding-region is that any spanning convex region that is completely visible to a point in a guarding-region $gr$ is also completely visible to every point of $gr$. The family of the visible lists of the guarding-regions is the family set of the geometric objects chosen for the set cover instance. 

Therefore, our overall approach can be summarized in the following steps:

-{\bf Step 1:} Choose a decomposition that decomposes $\P$ into a set of spanning convex regions.

-{\bf Step 2:} Compute approximate guard places inside the spanning convex regions (the guarding-regions), and obtain their corresponding visible lists.


-{\bf Step 3:} Formulate the geometric set cover problem and solve the discretized version with the approximation approach presented in~\cite{timothy-chan}.

As mentioned previously, we can choose any approach to decompose $\P$ into spanning convex regions, but the number of spanning convex regions determines the algorithm's running time. Regardless of the chosen strategy to compute spanning convex regions, we can compute the corresponding guarding-regions, and the approximation factor remains constant. Choosing a more dense grid that is fine enough for our goals leads to a more precise solution. In Section~\ref{sec:definitions}, we define the precise terminologies, and we will see how to compute guarding-regions. The suggested approach uses pure computational geometric techniques. Section~\ref{sec:scr} introduces a decomposing approach to compute spanning convex regions that takes $\mathcal{O}(n^{2^{k}})$ time to subdivide $\P$ into a list of spanning convex regions, where $k \leq \log n$ is a given positive integer. 
Even for a tiny amount of $k$, the final grid is remarkably dense, and 
as $k$ gets larger, the final grid gets exponentially denser.
Section~\ref{analyze} illustrates an analysis for Algorithm~\ref{algo.main}, and finally Section~\ref{sec:discussion} covers conclusion remarks.

\section{The Discretization Technique}
\label{sec:definitions}

 Consider a sub-region $\alpha$ of $\P$. A guard $g$ may situate in either of the following positions:   1- The guard $g$ does not see any point of $\alpha$. 2- The guard $g$ can see $\alpha$ partially. 3- The guard $g$ covers $\alpha$ entirely.
In the first and second situations, the proposed algorithm considers $\alpha$ as invisible for $g$. That is, for $\alpha$ to be counted as \emph{area-visible} to $g$, all of its points must be visible to $g$. A sub-region $sr$ is area-visible to another sub-region or segment denoted as the $source$, if all points of $sr$ are visible to the $source$. 

\begin{definition}
\emph{Area-Visibility}: A region $r$ is called area-visible to a source (a point, a segment, or a region), if all the points of $r$ are visible to all the points of the $source$.
\end{definition}

\begin{definition}
\label{def:sc-region}
\emph{SC-Regions}: Each minimal region formed from a decomposition of the given polygon $\P$ into a set of convex components is called a sc-region, short form a spanning convex region. The set containing all of the spanning convex regions is denoted by $\SCR$.
\end{definition}
Figure~\ref{fig.convex-regions} (a) illustrates an example of decomposing $\P$ into sc-regions.
 As mentioned before, $\P$ will be decomposed into other smaller special sub-regions called ``guarding-regions”. 
 If we put two guards in different arbitarary positions inside a guarding-region, the same sub-set of sc-regions is area-visible to both points. In order to compute guarding-regions we decompose every sc-region into guarding-regions. 
For a more precise definition for guarding-regions see the following definition. 
 \begin{definition}
 \label{def:guarding-region}
 \emph{guarding-regions}:
 Consider a simple polygon $\P$, decompose $\P$ into sc-regions. Given a query point $q$ in $\P$, a subset of sc-regions are area-visible to $q$. Call this subset as the \emph{visible list} of $q$ and denote this list by $\VL$. A guarding-region $gr$ is a convex polygon that the list of area-visible sc-regions to every point $p \in gr$ is the same.
  \end{definition}
We will estimate the visibility polygon of a guard in any arbitrary position inside a guarding-region by the visible list of that guarding-region. Denote the cardinality of a visible-list corresponding to a guarding-region $gr$ by $|\VL_{gr}|$.
Based on this definition, 
for a more straightforward presentation, we may refer to guarding-region's visibility list instead of a guard visibility polygon. 

In order to compute guarding-regions inside a sc-region, in the initial step we need to count on the area-visibility of each sc-region individually. A sub-region inside each sc-region that sees another sc-region completely is called a temporary sub-region, for short, we call each of them a temp-sub-region. See the following definition.
\begin{definition}
\emph{Temp-sub-region:} Given two sc-region $scr_{i}$ and $scr_{j}$ ($i\neq j$), suppose there is a sub-region inside $scr_{i}$ denoted by $tsr$ that $scr_{j}$ is area-visible to $tsr$, we call this sub-region a temp-sub-region.
\end{definition}
These temp-sub-regions will be computed by checking the edges of each sc-region individually. 
Each temp-sub-region eventually will be used to divide sc-regions into guarding-regions. 
Then, we sweep on each sc-region and decompose it to guarding-regions by the help of temp-sub-regions.
For the presentation to be simpler we define a visible list for each temp-sub-region too. Consider a temp-sub-region $tsr$ inside the sc-region $scr_{i}$ that $scr_{j}$ is area-visible to $tsr$ ($tsr_{i}(scr_{j})$), the visible list of $tsr$ contains $scr_{i}$ and $scr_{j}$. 

\subsection{Approximation Algorithm}
\label{full.discription}
Here we present an algorithm that computes the set of all the guarding-regions, this set is denoted by $\S$.
Algorithm \ref{algo.main} illustrates the pseudo-code of the approach. In Step 1, we decompose $\P$ into sc-regions, which $|\SCR|$ indicates the number of all the sc-regions (Recall that no matter how we compute sc-regions, we can compute the corresponding list of guarding-regions). Step 2 contains three $\for$ loops that corresponding to each sc-region, we check their edges one by one to find temp-sub-regions. Suppose $scr_{i}$ has $Ed_{i}$ edges. Note that any guard $g$ inside $scr_{i}$ may see other sc-regions ($scr_{j}$ $i\neq j$) only through $scr_{i}$'s edges. 
    Step 2 uses a procedure called $\Findtsr$ to compute temp-sub-regions correspond to every edge of a given sc-region. 
    
Step 3 receives all temp-sub-regions inside a given sc-region and decomposes them into guarding-regions. This step also uses a procedure called $Decompose$. This procedure receives a sc-region as a parameter. Then, by sweeping on that sc-region splits it into guarding-regions. The given sc-region denoted by $scr_{i}$ will be decomposed into a set of guarding-regions called $s_{i}$. The set $\S$ is the final set that contains all the guarding-regions, and let $|\S|$ denote the cardinality of $\S$. The algorithm computes every $s_{i}$ and adds it to $\S$ ({\small $1\leq i \leq {\tiny |\SCR|}$}). By the end of Algorithm~\ref{algo.main} $\S$ contains all guarding-regions, and their visible-lists. In fact, $\S$ contains pairs ($gr,\VL(gr)$) of all the guarding-regions and their corresponding visible lists.
A minimum subset of guarding-regions whose visible-list cover all the sc-regions solves the problem. 
In Step 4, by converting the problem to the geometric set cover problem we mange to solve the problem with a constant factor.  
\begin{theorem}
\label{theo.proof.opt}
There is an algorithm with $O(1)$-approximation factor for the Point Guarding problem and with $O(1)$-approximation factor with polynomial running time for the restricted Point Guarding problem (problem~\ref{prob}) if the size of the restricted grid is polynomial in the complexity of the given gallery ($\P$). 
\end{theorem}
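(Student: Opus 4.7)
The plan is to reduce the Point Guarding problem to a geometric set cover instance whose objects have near-linear union complexity, so that the machinery of~\cite{timothy-chan} yields the claimed $O(1)$-approximation. I would carry this out in three logical blocks: (i) establish that the discretization via guarding-regions loses only a constant factor against $\OPT$; (ii) bound the number and union complexity of the set-cover objects produced by Algorithm~\ref{algo.main}; and (iii) invoke the Quasi-Uniform sampling result of~\cite{timothy-chan} to convert the near-linear union bound into an $O(1)$-approximation.

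For block (i), I would first note that covering $\P$ is equivalent to area-covering every sc-region, since each point of $\P$ lies in some sc-region and the sc-regions tile $\P$. By Definition~\ref{def:guarding-region}, a guard placed anywhere in a guarding-region $gr$ area-sees exactly the sc-regions in $\VL_{gr}$; conversely, any guard in $\P$ lies in some guarding-region and can be ``snapped'' to it. The only loss comes from discarding partial visibility: an optimal guard might merely partially see an sc-region $scr_j$ that it does not area-see. I would argue by a charging scheme that replacing each partial coverage costs only a constant multiplicative factor, since the unseen portion of a convex sc-region is cut out by $O(1)$ reflex-vertex extensions through $scr_j$ and can be area-covered by a constant number of guarding-region picks; summing the charges over $\OPT$ gives a discretized optimum $|\OPT_{disc}|=O(|\OPT|)$.

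For block (ii), I would bound $|\S|$ polynomially in $|\D|$: Step 2 produces $O(|\D|)$ temp-sub-regions per sc-region, and the sweep in Step 3 refines each sc-region by those cuts, so $|\S|$ is polynomial in $|\D|$, and each visible list $\VL_{gr}$ is a union of sc-regions forming a connected sub-polygon of $\P$. The crucial geometric claim, and the \emph{main obstacle} I expect, is bounding the union complexity of the family $\F=\{\VL_{gr} : gr\in\S\}$. One has to show that the boundaries of these area-visibility regions, which arise from chords along reflex-vertex extensions, behave like a ``fairly general'' family in the sense of~\cite{kasturi,timothy-chan}, so that pairs of boundaries cross $O(1)$ times and the union of $m$ such objects has complexity $O(m\cdot h(m))$ with $h(m)$ near-constant. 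Formalising this, and in particular ruling out pathological nestings created by the polygon's reflex chain, is where the bulk of the technical work lies; I would approach it by charging each boundary crossing to either a vertex of $\P$ or to a reflex extension, exploiting the general-position assumption inherited from~\cite{cite:logopt}.

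For block (iii), once the near-linear union bound is in place, feeding the ground set of sc-regions together with $\F$ into the $O(1)$-approximation of~\cite{timothy-chan} returns a subfamily $\G\subseteq\F$ of size $O(|\OPT_{disc}|)$ covering every sc-region. Combined with block (i) this gives $|\G|=O(|\OPT|)$, and the corresponding guard set (one guard per chosen guarding-region) area-covers $\P$. The total running time is dominated by Algorithm~\ref{algo.main} and the set-cover solver, both polynomial in $|\D|$; under the hypothesis $|\D|=\mathrm{poly}(n)$ the full pipeline is polynomial, yielding the restricted variant claim, while for the unrestricted Point Guarding problem one takes a superpolynomially fine grid in the spirit of Efrat and Har-Peled~\cite{16}, preserving the $O(1)$-approximation at the cost of superpolynomial running time.
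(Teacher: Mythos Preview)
Your three-block outline matches the paper's strategy, but you misplace the difficulty. The union-complexity bound that you flag as ``the main obstacle'' is in fact the step the paper dispatches most directly: every visible list $\VL_{gr}$ is, by construction, a \emph{union of whole sc-regions} drawn from the fixed decomposition $\SCR$. Consequently the union of any subfamily of visible lists is again a union of sc-regions, and its boundary consists exclusively of sc-region edges; no analysis of pairwise boundary crossings, reflex-chain pathologies, or general-position assumptions is needed. The paper then takes one representative point per sc-region (so $m=|\SCR|$) and observes that the union complexity of the objects is $O(|\SCR|)$, i.e.\ linear in $m$, which is precisely the hypothesis fed into~\cite{timothy-chan}. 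Your plan to charge crossings to reflex extensions would work but is unnecessary once you notice that all the objects live in a single fixed arrangement.

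Conversely, your block~(i) --- the charging argument showing that discarding partial visibility costs only a constant factor against $\OPT$ --- is where the paper is least explicit. The paper asserts the constant loss in the introduction and, in Section~\ref{sec:scr}, argues informally that after enough refinement rounds each optimal guard's visible footprint becomes aligned with sc-region boundaries; it does not carry out a per-guard charging scheme of the kind you sketch. So your proposal is essentially the paper's proof with the emphasis inverted: you plan heavy geometric machinery for a step the paper handles in one line, and you supply rigor for a step the paper leaves largely informal.
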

Subsection~\ref{sec:approximation-analysis} deals with the approximation analysis of the algorithm. 
\begin{algorithm}[t]
\caption{Main Procedure}
\begin{algorithmic}[1]
\Procedure{Point Guarding($\P$)}{}
\State define $\longleftarrow$ = Add an element to a set/list.
\State define $tsr_{i}(scr_{j})$ = The temp-sub-region in the $i^{th}$ sc-region where $scr_{j}$ is area-visible to $tsr_{i}(scr_{j})$.
\State define //* = comment

//{\small {* \bf Step 1}}
\State Decomposing $\P$ into sc-regions and put them in $\SCR$.

//{\small {* \bf Step 2}}
\For{ i= 1 ; $i \leq  |\SCR|$; i++ } //{\small * i selects a sc-region}
    \For{ k=1 ; $k \leq  |Ed_{i}|$; k++ } //{\small * pick the $k^{th}$ edge of $scr_{i}$}
               \For{ j=1 ; $j \leq  |\SCR|$; j++ } //{\small * find every $tsr$ corresponding to every $scr_{j}$ inside $scr_{i}$ }
                  \State $tsr_{i}(scr_{j})$ = \Findtsr($ed_{ik}$,$scr_{j}$) //{\small * $ed_{ik}$ is the $k$th edge of $scr_{i}$ }
                  \State {\bf if} $tsr_{i}(scr_{j}) \neq \{\}$ {\bf do}
                  \State  $\ \ \ $ $\VL(tsr_{i}(scr_{j}))$ $\longleftarrow$ $scr_{j}$
               \EndFor
    \EndFor
\EndFor
//{\small {* \bf Step 3}}

//{\small \emph{*Extracting guarding-regions from temp-sub-regions in each $scr$:}}
\For{ i= 1 ; $i \leq  |\SCR|$; i++}
        \State Decompose($scr_{i}$)
        \State $s_{i}$ $\longleftarrow$ \emph{(the result regions, corresponding visible-lists)}.
      \For{ each~$gr$~in $s_{i}$ }
                 \State $\S$ $\longleftarrow$ $(gr,\VL_{gr})$;
            \EndFor
\EndFor

//{\small {* \bf Step 4}}

\State Formulate the set cover problem and solve it approximately.
\EndProcedure
\end{algorithmic}
\label{algo.main}
\end{algorithm} 
\begin{figure}[thp]
\begin{center}
\includegraphics[scale=0.7]{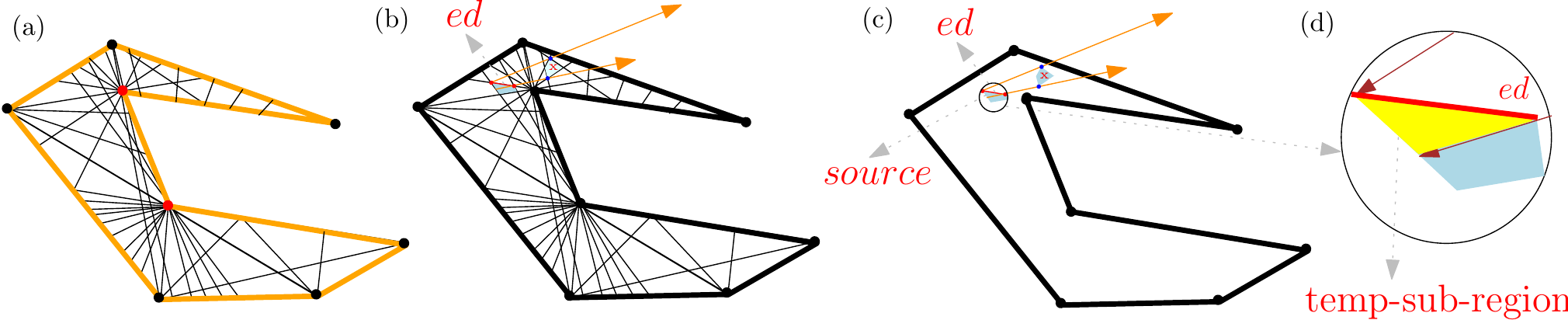}
\caption{ Figure (a) illustrates a given polygon that is decomposed into a set of sc-regions. Figure (b) shows a specific sc-region called source. Consider one edge of this sc-region denoted by $ed$, and a target sc-region denoted by $x$. Figure (c) illustrates $source$, $ed$, and $x$ more clearly. Figure (d) illustrates a temp-sub-region inside $source$ that every point of this temp-sub-region sees all the points of $x$.}
\label{fig.convex-regions}
\end{center}
\end{figure}  
\subsubsection{Procedure Findtsr(a segment, a sc-region)}
\label{subsec.algo.convex}

Consider a given segment $ed$ as a \emph{viewer}, we intend to find out if $ed$ can see a given spanning convex region called $target$ completely. 
Suppose that $ed$ is a given edge of a spanning convex region called $source$. 
    
We compute the complete-visibility-polygon of $ed$ ($\CVP(ed)$) and check if the target fits inside that. 
If the answer is yes, then we found a target that is area-visible to $ed$. 
We use the endpoints of $ed$ to find a temp-sub-region $tsr$ that can make the target area-visible. If so, we add it to \emph{the output set} of the $\Findtsr$ procedure. 
While we check the complete-visible-polygon of $ed$, consider the rays coming out of the endpoints of $ed$ that touch the boundary of the target. These rays must release from $ed$ and reach the target. The opposite direction from the target to $ed$ makes two half-lines that intersect $ed$ on its endpoints.

We extend these half-lines backward inside the source. The intersection between these two half-lines and the boundary $source$ creates a temp-sub-region $tsr$. By using binary search on the boundary of $source$ (which is a convex hull), we can compute this intersection in $\mathcal{O}(\log (|source|))$ time, where $|source|$ denotes the complexity of the source. Now we know that $target$ is area-visible to $tsr$.  

For a specific edge $ed$, to compute a temp-sub-region $tsr$; It takes $\mathcal{O}(n)$ time to compute its complete-visibility-polygon, and it takes $\mathcal{O}(\log |source|)$ to compute the boundary of $tsr$. Thus, it takes $\mathcal{O}(n+\mathcal{O}(\log |source|))$ time for each run of the $Findtsr$ procedure. Later we will see that the complexity if a spanning convex region is polynomial in terms of $n$, so the Find procedure takes $\mathcal{O}(n)$ time.

To compute all temp-sub-regions inside the source, we need to count on all the edges of the source. Note that there might only one point or an interval on $ed$ that sc-region is area-visible to that. 

\subsubsection{Procedure {\bf Decompose(spanning convex region)}}
\label{sub.algo.intersect-summary}
This procedure intends to decompose a given spanning convex region ($scr$) into guarding-regions.
According to Algorithm~\ref{algo.main} when we reach this procedure, every spanning convex region is divided into some temp-sub-regions that probably share common areas. Line 12 counts every spanning convex region, and line 13 calls the $Decompose$ procedure to obtain guarding-regions inside a given spanning convex region. The $Decompose$ procedure computes the guarding-regions inside $scr$ from the temp-sub-regions (see Figure~$\ref{fig.decopposing.cases}$). 
\begin{figure}[tp]
\begin{center}
\includegraphics[scale=0.5]{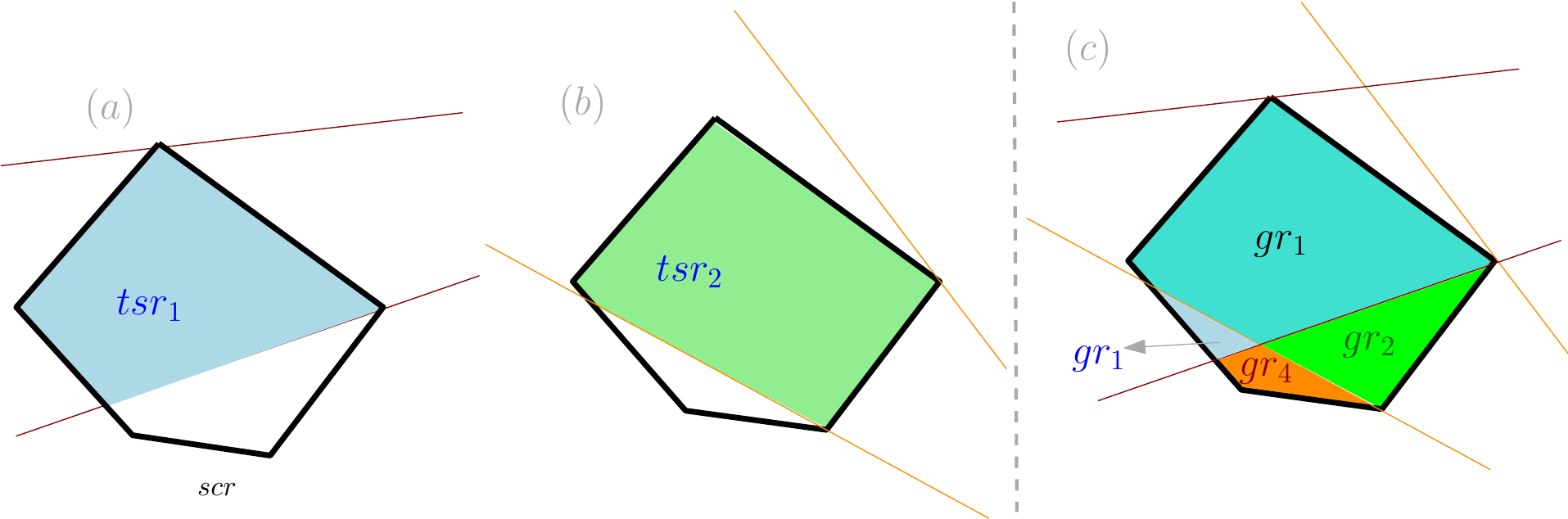}
\caption{ Three sc-regions are considered in this figure. The one that is illustrated and denoted by $scr$, and two others, let call them $scr_{1}$ and $scr_{2}$. Figure $(a)$ illustrates a temp-sub-region ($tsr_{1}$) in blue. Any guard inside $tsr_{1}$ can make $scr_{1}$ area-visible. Figure $(b)$ illustrates $tsr_{2}$, $scr_{2}$ is area-visible to $tsr_{2}$. Finally, figure~$(c)$ illustrates the decomposition of $scr$ into four guarding-regions. 
See figure $(c)$, $gr_{4}$ is illustrated in orange and makes only $scr$ area-visible. Any guard inside $gr_{2}$ can make $scr$ and $scr_{2}$ area-visible but not $scr_{1}$. }
\label{fig.decopposing.cases}
\end{center}
\end{figure}

A temp-sub-region $tsr$ is obtained from the area between two half-lines (backward rays). 
We aim to decompose every sc-regions $scr$ using their intersection with the area between these half-lines. 
Denote the number of temp-sub-regions inside $scr$, including $scr$, by $|scr|$, which indicates the complexity of $scr$.
For each temp-sub-region $tsr$ in $scr$, except for $scr$ itself, denote a starting half-line by $shl(tsr)$, and an ending half-line by $ehl(tsr)$. So, inside $scr$ a temp-sub-region $tsr$ is between two half-lines $shl(tsr)$ and $ehl(tsr)$. We are going to sweep on $scr$, regardless of the direction we move on, for a simpler presentation, we refer to these half-lines $shl(tsr)$ and $ehl(tsr)$.
From the previous steps of Algorithm~\ref{algo.main}, for every two starting and ending half-lines it is already specified what sc-regions ($\neq scr$) are visible to the points between any two half-lines $shl$ and $ehl$. 

\begin{lemma}
\label{proof:decompose}
A sc-region can be decomposed into guarding-regions in $\mathcal{O}(|scr|\log |scr|)$, where $|scr|$ denotes the complexity of a sc-region.
\end{lemma}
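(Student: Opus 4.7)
The plan is to carry out a plane sweep over the arrangement formed inside $scr$ by the bounding half-lines of the temp-sub-regions it contains. Since $scr$ has $O(|scr|)$ temp-sub-regions and each contributes the two half-lines $shl(tsr)$ and $ehl(tsr)$, there are $O(|scr|)$ bounding half-lines in total, all anchored at points on the boundary of the convex region $scr$ and extending inward from it.

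First, I would pick a sweep direction in which each half-line is monotone, which is always possible since each half-line is a ray with a fixed direction starting on the boundary of the convex region $scr$. Sort the events at which the sweep line encounters a half-line, for instance by the parameter of the intersection of that half-line with a designated boundary chord of $scr$. This sorting step costs $O(|scr| \log |scr|)$ and produces the list of events in the order in which the sweep will process them.

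Next, sweep through $scr$ maintaining the current visible list $\VL$ in a balanced search tree keyed by sc-region identity. Encountering an event of type $shl(tsr)$ inserts the sc-region associated with $tsr$ into $\VL$; encountering the matching $ehl(tsr)$ removes it. Each insertion or deletion costs $O(\log |scr|)$. Between two consecutive events, the cell of the arrangement currently being swept over has a constant visible list, so I emit that cell as a new guarding-region and attach the current $\VL$ to it. Correctness follows directly from Definition~\ref{def:guarding-region}: within a single cell of the half-line arrangement every point lies on the same side of every bounding half-line and hence belongs to exactly the same set of temp-sub-regions, so any two points of the cell share a visible list; conversely, crossing a half-line flips membership in exactly one temp-sub-region, so the visible list changes and a new guarding-region must begin.

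The main obstacle is to bound the sweep work appropriately: one must argue that, by monotonicity of the sweep direction, each half-line contributes exactly one event along the sweep, giving $O(|scr|)$ events in total; and one must maintain $\VL$ under single-element insertions and deletions, which is exactly what accounts for the $\log$ factor. Putting the two costs together, the sorting contributes $O(|scr| \log |scr|)$ and the $O(|scr|)$ sweep events each cost $O(\log |scr|)$, giving the stated $O(|scr| \log |scr|)$ bound.
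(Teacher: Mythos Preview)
Your high-level plan---sweep over $scr$ and use the bounding half-lines of the temp-sub-regions as events---coincides with the paper's. The gap is in how you model what the sweep line sees. You maintain a \emph{single} visible list $\VL$ and speak of ``the cell of the arrangement currently being swept over'' as if exactly one cell is active at a time. But the half-lines $shl(tsr),ehl(tsr)$ are arbitrary rays inside the convex region $scr$ and together form a two-dimensional arrangement; at any fixed position the sweep line generally crosses \emph{many} cells of this arrangement, each with its own visible list. The paper's procedure (spelled out in the appendix) handles this by maintaining a whole list $\inPL$ of in-progress guarding-regions along $\cal{SL}$, terminating and creating several of them whenever a half-line or an intersection point is reached, and updating each one's visible list separately. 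Your single-$\VL$ bookkeeping collapses the transversal structure and would output the wrong decomposition.

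Relatedly, your event count is off. You argue that ``each half-line contributes exactly one event along the sweep,'' but the paper explicitly lists a third event type: the sweep line reaching an \emph{intersection point} of several $shl/ehl$ half-lines (case~3 in the appendix, Figure~\ref{fig.decopposing.cases}(c)). With $O(|scr|)$ half-lines these intersections are not bounded by $O(|scr|)$ in general, so the ``$O(|scr|)$ events, each costing $O(\log|scr|)$'' accounting does not go through as stated. The paper's own justification of the $\mathcal{O}(|scr|\log|scr|)$ bound is admittedly terse, but at least its sweep is structurally correct; your version would need to be rewritten to track multiple active regions and to account for half-line intersections before any running-time argument can be made.
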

\begin{proof}
To decompose $scr$, we use a sweep-line denoted by $\cal{SL}$. We start sweeping parallel to an edge of $scr$. Various events may appear, $\cal{SL}$ checks every possibility and chooses its next action in each step. Appendix subsection~\ref{sub.algo.intersect} deals with the decomposition procedure. As the complexity of a sc-region is  $\mathcal{O}(|scr|)$ the sweeping approach may needs $\mathcal{O}(|scr|\log |scr|)$ time.
\end{proof}
\section{Computing Spanning Convex Regions}
\label{sec:scr}
In this section we propose a strategy to obtain spanning convex regions to be used in the initial step of Algorithm~\ref{algo.main}. 
 As mentioned before, we can use other decomposition strategies like the grid proposed by Bonnet and Miltzo~\cite{cite:logopt}. However, if we choose that grid in addition to the grid points, we need to include the reflex vertices\footnote{A reflex vertex is a vertex that the angle between the two edges of $\P$ incident on that vertex inside $\P$ is greater than $\pi$.}
 and intersections of extending lines (the extension of of two vertices is the line that contains those vertices). Otherwise, we run into an issue illustrated in Figure 1 of ~\cite{cite:logopt}. Here, we do not give details of that decomposition. Instead, we refer interested readers to \cite{cite:logopt} for further details.

First consider the following approach to decompose $\P$ into a set of spanning convex regions denoted by $\SCR$. Every iteration of the decomposition approach subdivides $\P$ into a set of cells. The final set of cells determines the set of spanning convex regions.

{\bf Decomposition Approach \#1}
\label{constructing-scr}
\begin{enumerate}
    \item 
Ignore the edges of $\P$. Connect every vertex of $\P$, and obtain every unique line passing from the vertices.
Compute all the intersections of these the lines.

\item Run this step $k$ times:
Connect all the intersection points and compute the lines crossing them. Compute the criss-cross of all the new lines with the previous lines.

\item Trace $\P$ and compute the intersection of this network with $\P$. 
\end{enumerate}


Consider a guard $g_{opt}$ in $\OPT$. A reflex vertex may block the visibility of $g_{opt}$ not to see a part of $\P$. Consider a situation where a blocking reflex vertex $\rf$ causes $g_{opt}$ to see a sc-region $scr$ partially. The rest of $scr$ must be covered by some other guards in $\OPT$. 
Without loss of generality, suppose that there is a visible area of $scr$ that is independently covered by $g_{opt}$. Denote \emph{the whole} visible part of $scr$ that is visible to $g_{opt}$ by $\Vp_{g_{opt},scr}$. If a decomposition approach can split $\P$ so that every visible part of each guard in the optimal solution suits sc-regions, then the solution of the restricted point guarding equals the solution of the general point guarding. This is because, in such a situation, every sc-region becomes \emph{completely} visible by at least one guard in the optimal solution. Recall that Algorithm~\ref{algo.main} counts on the complete visibility of the sc-regions to construct visible lists of the guarding-regions.

Consider one cell $c$ after step $i$ of the decomposition approach \#1. For two guards in the optimal solution $g^{1}_{opt}$ and $g^{2}_{opt}$, consider the visible parts $\Vp_{g^{1}_{opt},c}$ and $\Vp_{g^{2}_{opt},c}$. Suppose $\Vp_{g^{1}_{opt},c}$ and $\Vp_{g^{2}_{opt},c}$ share an area $A$ in $c$. The common area $A$ lies between two lines passing through $\overline{g^{1}_{opt}\rf_{1}}$ and $\overline{g^{2}_{opt}\rf_{2}}$, where $\rf_{1}$ and  $\rf_{2}$ are two blocking reflex vertices. After $k$ steps, if we divide $c$ so that the area $A$ gets covered by exclusive sc-regions, then the decomposition approach works for the general point guarding problem. Otherwise, we have to restrict guards to lie on the vertices of the grid constructed by the decomposition process. 

Consider the unique line passing through $\overline{g_{opt}\rf}$, for a guard in $\OPT$ and a blocking reflex vertex $\rf$. This line determines three sets of cells in each step of a decomposition approach: The set of cells that are completely visible to $g_{opt}$, the set of cells that are partially visible to $g_{opt}$, and the set of cells that are completely invisible to $g_{opt}$. As the polygon is a simple polygon, if a cell is partially visible or if it is invisible, it cannot become visible from elsewhere by the same guard. For $g_{opt}$, either a completely visible or a completely invisible cell in each step is called a \emph{Marked Cell by $g_{opt}$}.

Consider step $i$, we know that every cell $c$ is a convex hull. In the decomposition approach \#1, when we connect the intersection points to each other to obtain new lines for the subdivision operation in the $(i+1)^th$ step, we indeed connect the diagonals of every cell. So, every cell in the $i^{th}$ step gets at least divided to the sub-cells obtaining from the connection of its vertices (the diagonals). Except for a triangle, by connecting the diagonals of a convex hull, there will be at least one point in the middle of the convex hull that holds some sub-cells around itself. Denote such a middle point by $p \in c$. Consider the unique line passing through $\overline{g_{opt}\rf}$ that makes $c$ partially visible, and suppose that $\Vp_{g_{opt},c}$ is not completely visible by any other guard in $\OPT$. This line can only passes either through one side of $p$ or it can pass through $p$. No matter how this line may be situated, there is at least one sub-cell inside $c$ that after the subdivision process of the $(i+1)^{th}$ step becomes a marked cell for $g_{opt}$. So, for each partially visible cell in the $i^{th}$ step we know that after the $(i+1)^{th}$ subdivision step at least one sub-cell becomes marked for every guard in $\OPT$. So the area of partially visible cells gets smaller by each iteration. A simple analysis reveals that using a binary search after at most $\mathcal{O}(\log n)$ steps, the subdivided cells gets close enough to the shared area $A$ between the visible parts of the guards. Thus, if we set $k=\mathcal{O}(\log n)$ the decomposition approach works in the case of general point guarding problem.

However, if in the $i^{th}$ step, a cell $c$ is a triangle. Then, we have to split $c$ so that there exists a middle point inside $c$. This point can make the decomposition process split an unmarked $c$ into a set of sub-cells so that at least one sub-cell becomes a marked cell for each guard in $\OPT$ after the $(i+1)^{th}$ step. The decomposition approach \#2 presented in the following modifies the first decomposition approach to split the triangles correctly.  




{\bf Decomposition Approach \#2} (The main approach)
\label{constructing-scr}

\emph{$i^{th}$-order-Convex-Regions}: Each minimal region formed in the $i^{th}$ step of the subsequent approach is called an $i^{th}$-order-convex-region. 
Consider a small given constant denoted by $k$ that $i \leq k$.
\begin{enumerate}
    \item Ignore the edges of $\P$. Consider every unique line passing from connecting every vertex of $\P$, and compute
all the intersections of all the lines crossing each other. The set of cells obtained form this subdivision demonstrates the zero-order-convex-regions. 
\item In the step $i \geq 1$, run the following splitting process $k$ times. For each $i$-order-convex-region, draw all the unique lines crossing the specified segments designated in the below:
\begin{enumerate}
 \item Connect every two intersection points to each other.
    \item If the $(i-1)^{th}$-order-convex-region is a triangle connect the middle point of every edge to the third vertex of the triangle. 
\end{enumerate}
\item  Compute all the intersections of the above-mentioned lines.
\item  
Now trace $\P$ and compute the intersection of the boundary of $\P$ and the
above-mentioned lines.
\end{enumerate}
The $k^{th}$-order-convex-regions constructed from the above-mentioned approach specify the set of the spanning convex regions (see Figure~\ref{fig:decopposing.approach} (a) and (b)). 
\begin{figure}[tp]
\begin{center}
\includegraphics[scale=0.7]{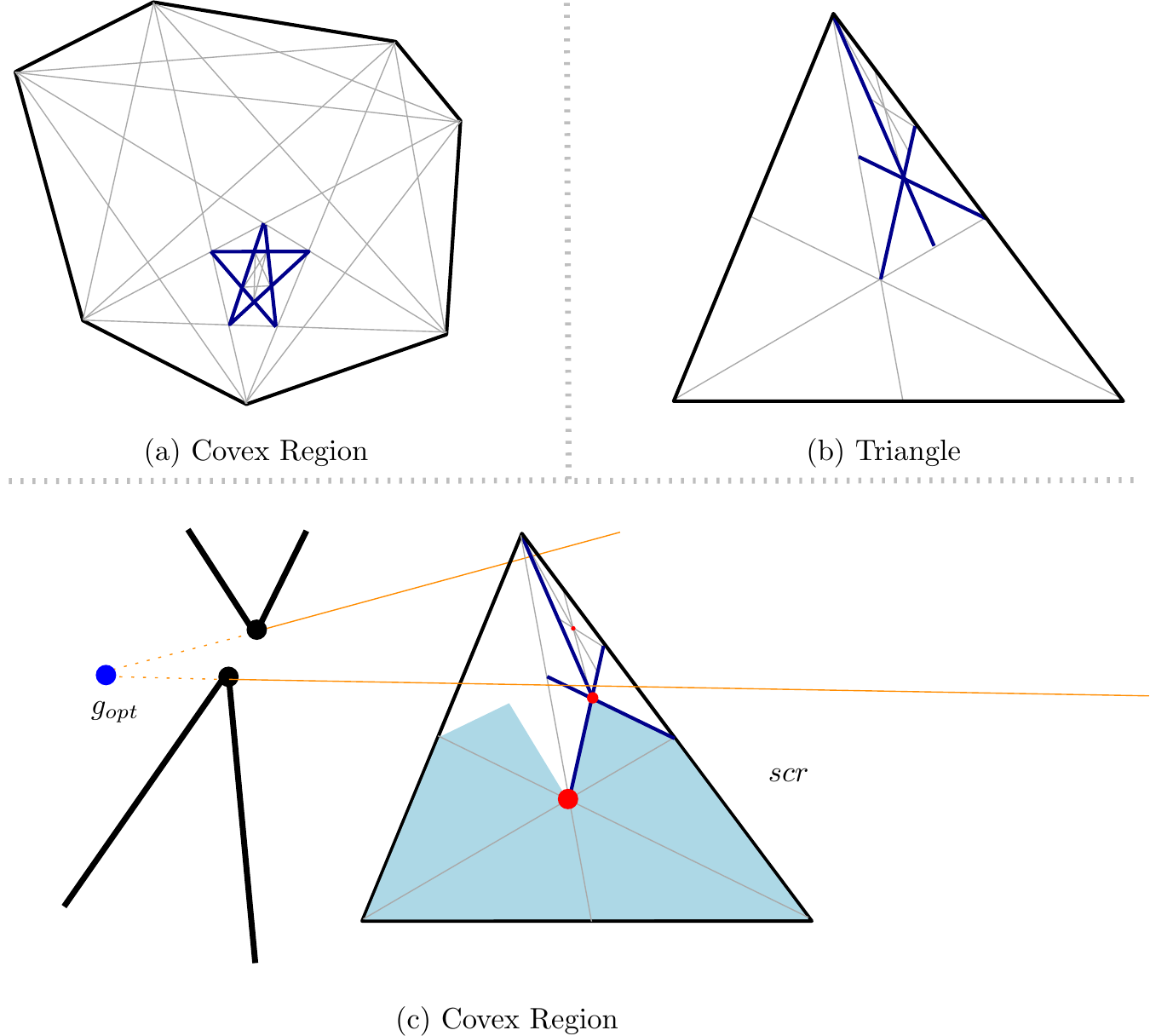}
\caption{Three steps of  decomposition process.}
\label{fig:decopposing.approach}
\end{center}
\end{figure}
   
\emph{Running Time Analysis of the Decomposition Approach:}

A simple analysis reveals that given a \emph{constant} $k$, the Decomposition Approach works in polynomial time. 


Without lost of generality we compute the complexity of Decomposition Approach \#1. We know that the complexity of the running time of Decomposition Approach \#2 is the same.

We know that $m$ lines can have at most $m^{2}$ intersections. The polygon has $n$ vertices. 
So, we have $\mathcal{O}(n^{2})$ lines, and $\mathcal{O}(n^{4})$ intersection-points. We connect these intersection-points to each other iteratively. For $i=2$ we have $\mathcal{O}(n^{8})$ cells. So the running time of the Decomposition Approach is $\mathcal{O}(n^{2^{k}})$. 

\section{Analysis of Algorithm~\ref{algo.main}}
\label{analyze}

This section covers analysis for Algorithm~\ref{algo.main}. First, note that due to line $6-8$ of the algorithm, we know that the guards chosen by the algorithm surely cover $\P$, so the algorithm provides a feasible solution. That is because when $i=j$, every spanning convex region will be added to get covered in the final solution. Although each spanning convex region $scr$ is decomposed into many temp-sub-regions, each of the result temp-sub-regions has $scr$ in its visible-list. So, all the spanning convex regions and consequently $\P$ will be covered.

\begin{lemma}
\label{lem:polynomial}
Algorithm~\ref{algo.main} takes polynomial time in the complexity of the grid size ($|\SCR|$).
\end{lemma}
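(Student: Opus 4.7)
The plan is to walk through each of the four steps of Algorithm~\ref{algo.main} in turn and bound its running time as a polynomial in $|\SCR|$. Throughout, I would use the fact that the sc-regions form a convex partition of $\P$, so $n = O(|\SCR|)$, and consequently any polynomial bound in $n$ is also polynomial in $|\SCR|$.

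For Step 1, if we use Decomposition Approach \#2 from Section~\ref{sec:scr} with a constant $k$, the decomposition can be produced in $O(n^{2^k})$ time, which is polynomial in $n$ and hence in $|\SCR|$. Alternatively, if the decomposition is supplied as input (as in the restricted variant), this step is essentially free.

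For Step 2, the three nested for-loops invoke $\Findtsr$ a total of $\sum_{i} |Ed_i| \cdot |\SCR|$ times. Because the sc-regions tile $\P$ with convex cells, $\sum_i |Ed_i| = O(|\SCR|)$, so the number of $\Findtsr$ calls is $O(|\SCR|^2)$. Each call takes $O(n + \log |source|) = O(n)$ time by the analysis in Subsection~\ref{subsec.algo.convex}, so Step 2 runs in $O(n \cdot |\SCR|^2)$ time. For Step 3, $Decompose$ is called once per sc-region, and by Lemma~\ref{proof:decompose} each call takes $O(|scr|\log|scr|)$ time, where $|scr|$ is the number of temp-sub-regions inside that sc-region. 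Summing, $\sum_i |scr_i|$ is bounded by the total number of temp-sub-regions produced in Step 2, which is $O(|\SCR|^2)$. Hence Step 3 runs in $O(|\SCR|^2 \log |\SCR|)$ time. Finally, Step 4 formulates a geometric set cover instance with $O(|\SCR|)$ ground elements and $|\S|$ sets, where $|\S|$ is polynomial in $|\SCR|$; the approximation algorithm of Chan et al.~\cite{timothy-chan} solves this in polynomial time in the instance size.

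The main obstacle is carefully bounding $|\S|$ and the individual $|scr_i|$, since the half-lines introduced by $\Findtsr$ may interact in a complicated way inside a single sc-region. I would handle this with a standard planar-arrangement argument: each pair (source-edge, target-sc-region) contributes at most one pair of half-lines, so the arrangement of all half-lines inside any sc-region has complexity $O(|\SCR|^2)$, and the total complexity summed across sc-regions remains polynomial in $|\SCR|$. Combining the four bounds yields the desired polynomial-in-$|\SCR|$ running time.
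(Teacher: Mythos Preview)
Your proof is correct and takes essentially the same step-by-step approach as the paper: bound each of the four steps of Algorithm~\ref{algo.main} separately and observe that all are polynomial in $|\SCR|$. Your per-step bounds are in fact slightly sharper than the paper's (e.g., you use the planar-subdivision fact $\sum_i |Ed_i| = O(|\SCR|)$ to count $O(|\SCR|^{2})$ calls to $\Findtsr$, whereas the paper crudely bounds each $|Ed_i|$ by $|\SCR|$ and reports $O(|\SCR|^{3})$ calls), but the structure of the argument is identical.
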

\begin{proof}

A simple analysis reveals that the algorithm works in polynomial time if the number of the spanning convex regions, denoted by $|\SCR|$, be polynomial in complexity of the polygon. 
Every run of the Findtsr procedure takes $\mathcal{O}(n+\log|\SCR|)$ time, and every run of the Decompose procure takes $\mathcal{O}(n+\log|\SCR|)$ time complexity.

The Findtsr procedure will be invoked $\mathcal{O}(|\SCR|^{3})$ times, and the Decomposition procedure will be invoked $\mathcal{O}(|\SCR|)$ times. So, the time complexity of the three first steps is $\mathcal{O}(n|\SCR|^{3}+|\SCR|^{3}\log|\SCR| + |\SCR|^{2})$. Step $4$ of the algorithm uses the previous  polynomial time algorithms proposed by Kasturi Varadarajan~\cite{kasturi,kasturi-2} and Timothy M. Chan et al.~\cite{timothy-chan}.

\end{proof}

Suppose we are given a set $\M$ of $m$ points in $\P$, and a set $\S$ of visible-lists of guarding regions—the union of the visible-lists covers $\P$ and consequently $\M$. We seek the minimum subset of $\S$ that covers $\M$. This is clearly an instance of the combinatorial set cover problem. 
We intend to apply an efficient approximation algorithm proposed by Timothy M. Chan et al.~\cite{timothy-chan} in 2011, which is based on the approach proposed by Kasturi in 2010 \cite{kasturi}. 
To use Timothy M. Chan et al. and Kasturi's approach, we need to make sure that the complexity of the boundary of the union (union complexity) of the visible-lists is near-linear. 
The approximation factor obtained by rounding the natural linear programming relaxation of the set cover problem is related to the combinatorial problem of bounding the size of an $\epsilon$-net. For $m$ objects with union complexity $mh(m)$, Kastur obtained an approximation factor of $2^{O(\log^{*}n)}\log h(n)$ \cite{kasturi-2}, and Timothy M. Chan et al. improved the result into a constant factor of approximation. 

The decomposition we obtained yields a set of sc-regions that we need to cover all of them to cover $\P$. Pick an arbitrary point inside each sc-region and put them in $\M$. Since any visible-list of a guarding-region covers the entire region of a sc-region in the visible-list, choosing an arbitrary point from every sc-region is enough to cover $\P$, so $m=|\SCR|$. Furthermore, the union complexity of all the visible-lists equals the complexity of the number of sc-regions. Therefore, the union complexity of our objects is linear to the number of objects.


\section{Discussion}
\label{sec:discussion}
 The point guarding problem has known to be one of the preliminary versions of the Art Gallery problem. Since years ago, this version of the problem was interesting for many scientists and even regular people who wanted to guard a place. The problem has been modeled in many versions; however, the point guarding version that allows the guards to be placed inside a simple polygon was proved to be NP-hard. In 1998, Eidenbenz proved that there could not be any algorithm that solves the problem with a $(1+\epsilon)$ approximation factor. In other words, there is no PTAS for this problem. In 2020, which is more than twenty years after Eidenbenz's result, Bonnet and Miltzow presented a $\mathit{O}(\log n)$-approximation factor for the point guarding problem. Therefore, there was a gap between these results, and finding a constant approximation factor should be noteworthy.
 
 Here, we presented an algorithm with a $\mathcal{O}(1)$-approximation factor for the point guarding problem where guards are restricted to lie on an arbitrary dense grid.
 If we choose the grid proposed by Bonnet and Miltzow, we can solve the general point guarding problem with a constant factor approximation in 
 running time polynomial in the ratio of the longest and shortest pairwise distances of the vertices of the art gallery. 
 Otherwise, if we choose the approach proposed in this paper, the restricted point guarding problem can be solved in $\mathcal{\~O}(n^{2^{k}})$, where $k$ can be chosen to be fine enough. It remains open that if for a constant $k$ the grid solution is indeed an approximation of an optimal solution. That is, we do not know if there is constant $k$ so that the optimal solution restricted to the final grid is at most a constant times larger than the general optimal solution. 
 
For the sake of simplicity, we only considered the case of an art gallery without holes. For future work, we generalize the approach for the case of an art gallery with holes.




\enlargethispage{\baselineskip}



\small
\bibliographystyle{abbrv}

\appendix
\newpage
\section{Decomposing a spanning convex region}
\label{sub.algo.intersect}

In this subsection, we see how to decompose a given sc-region into distinct and disjoint guarding-regions. This subsection is the complete version of ~\autoref{sub.algo.intersect-summary}. Here, we also prove \autoref{proof:decompose}.

Form the previous steps of Algorithm~\ref{algo.main} we know that a temp-sub-region is created based on the collision of two half-lines with the boundary of the given sc-region. The approach sweeps on the given sc-region. When the sweeper reaches a half-line that starts/ends a temp-sub-region, we add/eliminate the visible sc-region corresponding to that temp-sub-region to the visible-list of currently in progress guarding-region. If the sweeper reaches an intersection point of many half-lines, there could be more than one sc-region that we should add or eliminate from the currently in progress guarding-regions. The approach is presented in details in the following:

\emph{Decomposition-Process:}
Consider a sc-region $scr_{i}$. To decompose $scr_{i}$, we use a sweep-line called $\cal{SL}$. The sweeping process starts from an edge $e$ of $scr_{i}$. The sweep-line $\cal{SL}$ first contains $e$, and moves on $scr_{i}$ parallel to $e$ till it reaches the end of $scr_{i}$ where $\cal{SL}$ sees all the points of $scr_{i}$. 

The final decomposition of $scr_{i}$ results in several newly created guarding-regions. The Decompose procedure puts the final guarding-regions in a set called $s_{i}$. 
Note that we have to compute a visible-list for every new guarding-region $gr$. We already know that $\VL(gr \in scr_{i})$ contains $scr_{i}$. 
The sweep-line $\cal{SL}$ has a list of currently in progress guarding-regions that should decide where to end them. Denote this list by $\inPL$. Denote the number of guarding-regions in the current list of the sweep-line by $|\inPL|$. 
During the sweeping process $\cal{SL}$ may encounter with three cases: 1) It may reach a starting half-line $shl$. 2) It may reach an intersection-point of several $shl$ or $ehl$ half-lines. 3) Or it may reach an ending half-line $ehl$.

\begin{figure}[tp]
\begin{center}
\includegraphics[scale=0.6]{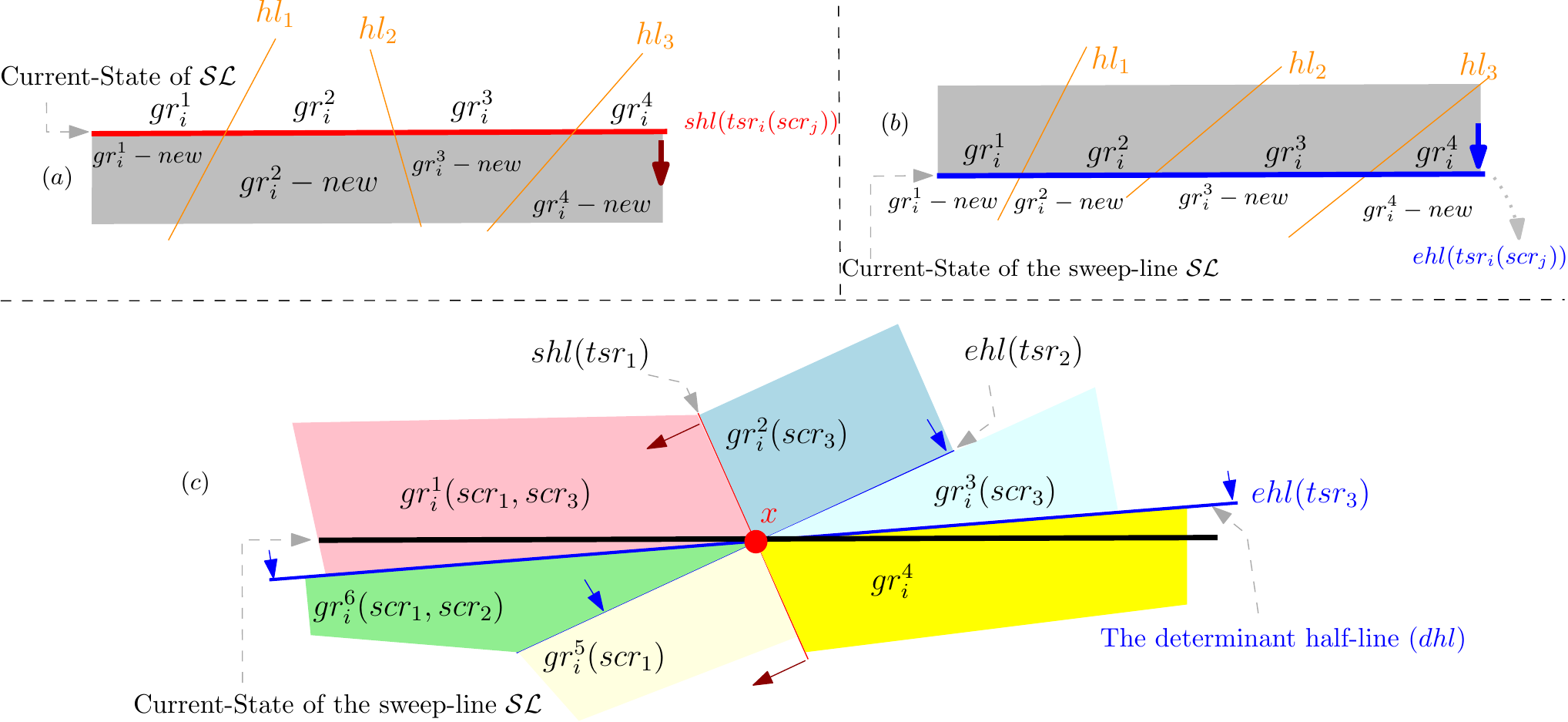}
\caption{
Figure $(a)$ shows that when $\cal{SL}$ reaches a $shl(tsr_{i}(scr_{j}))$ half-line we should create new guarding-regions with new visible-lists. We have to add $scr_{j}$ to the visible-list of the guarding-regions that $\cal{SL}$ has just passed. Figure $(b)$ illustrates a similar situation where $\cal{SL}$ reaches an ending half-line $ehl(tsr_{i}(scr_{j}))$. In such a situation we have to end the currently in process guarding-regions and create new guarding-regions with new corresponding visible-lists that do not contain $scr_{j}$ in their visible-lists. In Figure $(c)$ $\cal{SL}$ may reaches an intersection point $x$ of several half-lines (either $shl$ or $ehl$). In this case, there is a determinative half-line $dhl$ that passes $x$. This half-line can be either a starting half-line or an ending half-line. Since the sweep-line $\cal{SL}$ has encountered $dhl$ previously, the half-line $dhl$ do not make any new guarding-region itself. Nevertheless, the intersection-point $x$ make new guarding-regions from previously visited half-lines. See Figure (c), $gr^6$ exhibited in green or $gr^5$ in white are the new guarding-regions that are starting from $x$. However, $gr^2$ and $gr^3$ are ended at $x$. For $gr^1$ and $gr^4$ it is sufficient to update their visible-lists. Based on the type of each half-line we can easily create the new guarding-regions around $x$, and also we can determine their corresponding visible-lists. Note that if $dhl(tsr_{i}(scr_{j}))$ is parallel to the direction of $\cal{SL}$ then the sweep-line reaches a new half-line and a point, and the creation process of the new guarding-regions is a combined procedure of the above-mentioned strategies. In such a case, we have to consider $scr_{j}$ as an additional or removal region in the visible-lists of the new guarding-regions below $dhl(tsr_{i}(scr_{j}))$.}
\label{fig.decopposing.cases}
\end{center}
\end{figure}
Consider an arbitrary $scr_{k} \neq scr_{i}$ and suppose $tsr \in scr_{i}$ can make $scr_{k}$ area-visible. 
There are three different cases. See \autoref{fig.decopposing.cases} as an example.
In this figure, the numbers on power of $gr$ are normal indicators to itemize different guarding-regions. This figure illustrates three different cases where a sweep-line may encounter while tracing a sc-region $(scr_{i})$ to decompose it into guarding-regions. The three cases:
 
 1,2) If $\cal{SL}$ reaches a starting half-line $shl(tsr_{i}(scr_{k}))$ or an ending half-line $ehl(tsr_{i}(scr_{k}))$;
    Do these steps:
    \begin{enumerate}
        \item  End all the guarding-regions in the $\inPL$.
        That is because all the previously submitted guarding-regions in $\inPL$ are ended by the the new line. 
        \item  Add every newly ended guarding-region $gr$ to $s_{i}$ (the output set). 
        \item Based on the number of crossing lines with the half-line $shl(tsr_{i})/ehl(tsr_{i})$, create new guarding-regions in $\inPL$. Each of such new guarding-regions starts from $e/shl(tsr)$ and has its own boundary.
        \item Create a visible-list for each new $gr \in \inPL$ ($\VL(gr)$) with the initiation value of $scr_{i}$.
       \item For each $gr \in \inPL$, if $gr_{x}$ has $gr_{y}$ on the other side (the already swept side) of $e/shl(tsr)$, set $\VL(gr_{x}) \longleftarrow \VL(gr_{y})$.
       Note that we only check one line at a moment, so only one temp-sub-region and consequently the visibility of one sc-region is checked, and other visible sc-regions are still visible.
       \item For each new $gr \in \inPL$,
       
       {\bf if} $\cal SL$ reaches $shl(tsr_{i}(scr_{k}))$;

       $\ \ \ $Add $scr_{k}$ to $\VL(gr)$.
       
       {\bf else} $\cal SL$ reaches $ehl(tsr_{i}(scr_{k}))$;
       
        $\ \ \ $Remove $scr_{k}$ from $\VL(gr)$.
       
    \end{enumerate}
  3)
   If $\cal{SL}$ reaches an intersection-point $x$ of $k$ $shl$ half-lines and $k'$ $ehl$ half-lines;\\ Without loss of generality suppose all of these half-lines are on distinct lines, and assume that the sweep-line is slightly tilted so that it considers the intersections one by one. 
   We already know that $|inPL|=\frac{k+k'}{2}$ (see \autoref{fig.decopposing.cases}(c)). So, all the half-lines are already seen by $\cal{SL}$. The most recently visited half-line is either a $shl$ or a $ehl$. Again, w. l. o. g. suppose it is $shl(tsr_{i}(scr_{k}))$, and $shl(tsr)$ is parallel to the direction of the sweep-line $\cal{SL}$.
   
   Do the following steps:
        \begin{enumerate}
        \item  End every guarding-region in $\inPL$. These guarding-regions are above $shl(tsr)$, where $\cal{SL}$ was moving before it reaches $x$.
        \item  Add every newly ended guarding-region to $s_{i}$.
        \item  Based on the lines crossing $x$, create new guarding-regions in $\inPL$.
        \item Create a visible-list for each new guarding-region ($gr$) in $\inPL$ ($\VL(gr)$).
        \item Add $scr_{k}$ to every $\VL(gr)$. 
        \item There are nv  newly guarding-regions created in $s_{i}$. For each $gr$ check it with each half-line $hl$ intersected in $x$.
        
        (a) If $hl$ is a starting half-line $shl(tsr_{i}(scr_{k}))$ and $gr$ lies between $shl(tsr_{i}(scr_{k}))$ and $ehl(tsr_{i}(scr_{k}))$, then add $scr_{k}$ to $\VL(gr)$.
        
        (b) If $hl$ is an ending half-line $ehl(tsr_{i}(scr_{k}))$ and $gr$ is not between $shl(tsr_{i}(scr_{k}))$ and $ehl(tsr_{i}(scr_{k}))$ anymore, then remove $scr_{k}$ from $\VL(gr)$. 
        
        \end{enumerate}
Set $s_{i}$ from the above-mentioned approach as the output, which is the decomposed set of a given sc-region.

\section{List of Notations}
For a better presentation and the convince of the reader, we list the frequent notations that we used throughout the paper.

$scr$ = a sc-region

$gr$ = a guarding-region

$\rf$ = a reflex-vertex


$\VP$ = visibility polygon of a point or a guard

$\Vp$ = visible part of a guard inside a sc-region

$\VL$ = visible-list of a guarding-region

$\S$ = the set of all guarding-regions after the decomposition of $\P$ into guarding-regions by the beginning of Step 4 of Algorithm~\ref{algo.main}

$\OPT$ = the set of guards in the optimal solution of the point guarding problem

$\OPT(\S)$ = the set of guards in the optimal solution that can be obtained from $\S$.

\end{document}